\newtheorem{theorem}{Theorem}[section]
\newtheorem{corollary}[theorem]{Corollary}
\newtheorem{assumption}[theorem]{Assumption}
\newtheorem{remark}[theorem]{Remark}
\DeclareMathOperator*{\argmin}{arg\,min}
\newcommand{\bbN}{\mathbb{N}}
\newcommand{\tb}{\widetilde{b}}
\newcommand{\tf}{\widetilde{f}}
\newcommand{\tA}{\widetilde{A}}
\newcommand{\tM}{\widetilde{M}}
\newcommand{\hk}{\widehat{k}}
\renewcommand{\hm}{\widehat{m}}
\newcommand{\hv}{\widehat{v}}
\newcommand{\hlambda}{\widehat{\lambda}}
\newcommand{\hdelta}{\widehat{\delta}}
\newcommand{\cond}[1]{\kappa\left( #1 \right)}
\newcommand{\condp}[2]{\kappa_{#1}\bigl( #2 \bigr)}
\newcommand{\bimin}[2]{\min \left( #1, #2 \right)}
\newcommand{\bimax}[2]{\max \left( #1, #2 \right)}
\newcommand{\expect}[1]{\mathbb{E}\left[ #1 \right]}
\newcommand{\varance}[1]{\mathbb{V}\left[ #1 \right]}
\newcommand{\diag}[1]{\text{diag}\left( #1 \right)}
\renewcommand{\grad}{\nabla}
\newcommand{\MC}{{M}onte {C}arlo}
\newcommand*{\extendadd}{
  \mathbin{
    \mathpalette\extend@add{}
  }
}
\newcommand*{\extend@add}[2]{
  \ooalign{
    $\m@th#1\leftrightarrow$%
    \vphantom{$\m@th#1\updownarrow$}
    \cr
    \hfil$\m@th#1\updownarrow$\hfil
  }
}
\begin{document}

\title{Bold Diagrammatic \MC{} in the Lens of Stochastic Iterative
Methods}

\author{Yingzhou Li$^{\,\sharp}$,
        Jianfeng Lu$^{\,\sharp\,\dagger}$
  \vspace{0.1in}\\
  $\sharp$ Department of Mathematics, Duke University\\
  $\dagger$ Department of Chemistry and Department of Physics,
  Duke University\\
}

\maketitle

\begin{abstract} 
  This work aims at understanding of bold diagrammatic Monte Carlo
  (BDMC) methods for stochastic summation of Feynman diagrams from the
  angle of stochastic iterative methods. The convergence enhancement
  trick of the BDMC is investigated from the analysis of condition
  number and convergence of the stochastic iterative
  methods. Numerical experiments are carried out for model systems to
  compare the BDMC with related stochastic iterative approaches.
\end{abstract}

{\bf Keywords.} Bold diagrammatic \MC{}; stochastic iterative method;
diagrammatic \MC{}; quantum \MC{}; fixed point iteration.

\section{Introduction}
\label{sec:Intro}

Bold(-line) diagrammatic \MC{}~(BDMC) method~\cite{Prokofev2007}
employs bold-line trick in the diagrammatic \MC{}~(DMC) method to
simulate integrands represented by a diagrammatic structure. Such a
method adopts mathematical tools including \MC{} sampling of the
diagram and iterative method for the bold-line trick. This note first
establishes a solid mathematical understanding of the iterative method
proposed in the original BDMC paper~\cite{Prokofev2007}. Second, this
note clarifies the relationship between the iterative method in BDMC
and stochastic iterative methods. Based on the explicit connection, a
few stochastic iterative methods~\cite{Polyak1964, Tan2016, Duchi2011,
  Kingma2015, Bottou2016}, widely used and extensively tested in the
field of machine learning, are reintroduced in this note as potential
alternatives to BDMC with potentially faster convergence.

Both DMC and BDMC are proposed for ``many-electron problem'' which
involves interacting electrons. In order to describe an interacting
electron system, the dimension of the Hilbert space grows
exponentially in the system size; the high dimensionality becomes a
fundamental difficulty for numerical treatment. The quantum \MC{}
methods are thus natural candidates for these problems. Conventional
quantum Monte Carlo methods calculate solutions on finite-size
lattices, and then estimate the solution of the thermodynamic limit
(thus infinite system) via extrapolations, see e.g. reviews
\cite{Foulkes2001, Ceperley2010, Kolorenc2011, Austin2012}. On the
other hand, the DMC and BDMC sample and sum the truncated Feynman
diagram of the infinite system \cite{VanHoucke2008}. The Feynman
diagram is well developed and widely used tools in many-body
perturbation theory, see e.g. books \cite{Mattuck1992,
  FetterWalecka2003}. In particular, the summation of series of
Feynman diagrams works well for those that are convergent and sign
positive. In order to obtain the summation of the infinite long
diagram, the extrapolation technique is applied to a few results
corresponding to different numbers of truncation orders. However, for
many systems, the series of diagrams are asymptotic~(e.g., for strong
coupling systems) and sign-alternating. No solution, so far, fully
addresses these issues. Techniques have been developed to enlarge the
radius of the convergence and reduce the number of terms in the
diagram. BDMC is one of the promising technique among those.  BDMC,
instead of summing diagrams directly, sums all the bold-line diagrams
for irreducible single-particle self-energy $\Sigma$ and pair
self-energy $\Pi$ following {Dyson} and {Bethe-Salpeter} equation
respectively~\cite{Prokofev2007,Prokofev2008}. Based on the
``sign-blessing'' phenomenon, BDMC was successfully applied to
one-particle $s$-scattering problem~\cite{Prokofev2007}, the BCS--BEC
({Bardeen}-{Cooper}-{Schrieffer}-{Bose}-{Einstein}-Condensation)
crossover in the strongly imbalanced
regime~\cite{Prokofev2008,Prokofev2008a}, unitary Fermi
gas~\cite{VanHoucke2012}, Fermionized frustrated
spins~\cite{Kulagin2013}, two-dimensional {Hubbard}
model~\cite{LeBlanc2015}, etc.

\begin{figure}[htp]
    \centering
    \includegraphics[height=2in]{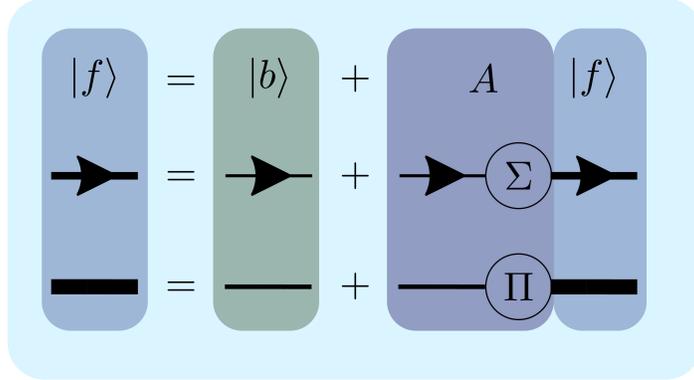}
    \caption{Relationship between linear system, {Dyson} equation and
    {Bethe-Salpeter} equation.}
    \label{fig:FeymanDiag}
\end{figure}

As from the original paper \cite{Prokofev2007}, BDMC can be viewed as trying
to solve a self-consistent linear equation:
\begin{equation} \label{eq:BDMC-self-consistent}
    \ket{f} = \ket{b} + A \ket{f},
\end{equation}
where $\ket{f}$ is an unknown vector, $\ket{b}$ is a given vector, and
$A$ is a linear operator.  Figure~\ref{fig:FeymanDiag} provides the
connection between \eqref{eq:BDMC-self-consistent} with either the
{Dyson} equation or the {Bethe}-{Salpeter} equation
\cite{VanHoucke2012}. Since either $\Sigma$ or $\Pi$ involves infinite
terms of diagrams, the evaluation is carried out via a stochastic
procedure up to a given number of terms. The evaluation of $A\ket{f}$,
therefore, is stochastic, where the error is controlled by the number
of \MC{} sampling. {Prokof'ev} and {Svistunov} in 2007 proposed an
iterative method to solve for $\ket{f}$ in
\eqref{eq:BDMC-self-consistent} under the stochastic setting, whose
connection to conventional iterative algorithm in numerical linear
algebra is not obvious from the first sight. As {BDMC} achieves
success in many interacting systems and shows great promise,
establishing a concrete understanding of the iterative method in a
mathematical way is crucial for further improvement of the method, and potentially adapt the method to other applications.

In this note, we interpret the ``magic'' method proposed
in~\cite{Prokofev2007} as a combination of two crucial steps. The
first step replaces the original operator $A$ by a quadratic
polynomial of $A$, $p(A)$, such that $p(0)=0$, $p(A) \succeq 0$ and
potentially $\cond{p(A)} \ll \cond{A^*A}$, where ``$A \succeq 0$''
means that $A$ is a positive semidefinite matrix and $\cond{A}$
denotes the condition number of matrix $A$.  Here, ``$p(0)=0$''
guarantees the equality in \eqref{eq:BDMC-self-consistent},
``$p(A) \succeq 0$'' guarantees the convergence of the iterative
method, and ``$\cond{p(A)} \ll \cond{A^*A}$'' enables faster
convergence rate. Based on this understanding, we suggest another form
of the quadratic polynomial such that the similar properties can be
achieved for a wider range of $A$. In the second step, a fixed point
iteration with adaptive stepsize is applied to
\eqref{eq:BDMC-self-consistent} with $A$ being replaced by $p(A)$, and
the corresponding update on $\ket{b}$.  When $A$ is a {Hermitian}
matrix, the second step can be viewed as a method of stochastic
gradient descent. Hence, later in the note, we employ stochastic
gradient descent methods from machine learning as alternative methods.
All methods are tested on synthetic stochastic matrix $A$ instead of
diagrams for real physical systems; those will be considered for
future works.

In this note, we will provide mathematical understanding of the
stochastic iterative method~\cite{Prokofev2007} in
Section~\ref{sec:BDMC}. Section~\ref{sec:SGDs} lists several
alternative stochastic iterative methods. All the mentioned methods
are tested and compared in Section~\ref{sec:NumRes}. Finally, in
Section~\ref{sec:Conclusion}, we conclude the note together with
discussion of possible future works.


\section{Numerical method of {BDMC}}
\label{sec:BDMC}

Recall the fixed-point problem BDMC tries to solve:
\begin{equation} \label{eq:BDMC-fixed-point}
    \ket{f} = \ket{b} + A \ket{f}.
\end{equation}
In the viewpoint of linear algebra, we rewrite the equation as
\begin{equation} \label{eq:BDMC}
    M \ket{f} = \ket{b},
\end{equation}
where $M = I - A$, $I$ is the identity matrix of the same size as $A$.
{BDMC} proposes replacements
$\ket{b} \rightarrow \ket*{\tb} = \ket{b} - \lambda A \ket{b}$ and
$A \rightarrow \tA = (1+\lambda) A - \lambda A^2$ to ensure the
convergence of the iterative method, where $\lambda$ is a constant
related to the spectrum of $A$. Then a simple fixed-point iteration is
coupled with a special {N{\o}rlund} means to solve
\eqref{eq:BDMC-fixed-point} with $\tA$ and $\ket*{\tb}$. In the
following subsections, we reinterpret the former as a preconditioning
step and the latter as a stochastic gradient descent method with
diminishing stepsize.

In the rest of this note, we would stick to linear algebra notations as
in \eqref{eq:BDMC}. Accordingly, we have $\ket*{\tb} =
(1-\lambda)\ket{b} + \lambda M \ket{b}$, $\tM = I-\tA = (1-\lambda) M +
\lambda M^2$. Additionally, we follow the assumption as in
\cite{Prokofev2007} that $A$ is Hermitian, i.e., $A^*=A$. Therefore,
both $M$ and $\tM$ are Hermitian as well. 

\subsection{Preconditioning indefinite matrices}
\label{sec:Precond}

For almost all first-order iterative methods, positivity of $M$ is
required for convergence. Methods that work for indefinite matrices,
such as MINRES~\cite{Paige1975}, GMRES~\cite{Saad1986}, adopt some
transforms of $M$, e.g., $M^*M$, $M^2$, to turn the matrix in the
iterative method positive definite. Another important property of $M$ or
$\tM$ related to convergence rate is the condition number. In general,
smaller condition number leads to faster convergence. However, treatment
as $M^*M$ or $M^2$ squares the condition number which is undesirable in
practice. In this section, we analyze the positivity of $\tM$ and its
condition number comparing to $\cond{M^2}$.

Assume $M$ is an indefinite invertible matrix of size $n$ by $n$.
According to the earlier assumption, $M$ is Hermitian. Let
$M = Q \Lambda Q^*$ be the eigenvalue decomposition of $M$, where $Q$
is a unitary matrix of size $n$ by $n$, $\Lambda$ is a diagonal matrix
with $M$'s eigenvalues, $\{m_1,m_2,\dots,m_n\}$, in decreasing order,
i.e.,
$m_1 \geq m_2 \geq \cdots \geq m_\ell > 0 > m_{\ell+1} \geq \cdots
\geq m_n$
for $1 < \ell < n$. To simplify the presentation in the sequel, we
introduce handy notations as, $L_+ = \max_{1 \leq i \leq n} m_i$,
$L_- = \min_{1 \leq i \leq n} m_i$, $\tau_+ = \min_{m_i > 0} m_i$, and
$\tau_- = \max_{m_i < 0} m_i$, which define the boundaries of the
positive and negative spectrum of $M$.

$\tM = (1-\lambda)M + \lambda M^2$ inherits the same
eigenvectors as $M$. The eigenvalues of $\tM$ are $\left\{(1-\lambda)m_i +
\lambda m_i^2\right\}_{i=1}^n$. Denote the quadratic polynomial
depending on parameter $\lambda$ as $p_\lambda(x) = \lambda x^2 +
(1-\lambda)x$. The eigenvalues of $\tM$, therefore, are polynomial
$p_\lambda(x)$ acting on the eigenvalues of $M$. $\tM$ being a positive
definite matrix is equivalent to $p_\lambda(m_i) > 0$ for all
$m_1, \dots, m_n$. Since $p_\lambda(x)$ is a quadratic polynomial with
zero being one of its root, $p_\lambda(\tau_-) > 0$ and
$p_\lambda(\tau_+) > 0$ imply that $\lambda > 0$. At the same time, the
second root of $p_\lambda(x)$, $\frac{\lambda - 1}{\lambda}$, must lies
in the interval $(\tau_-,\tau_+)$. Hence the equivalent condition for
$\tM$ being positive definite is that
\begin{equation} \label{eq:tM-psd-cond}
\tau_- < \frac{\lambda - 1}{\lambda} < \tau_+ \Leftrightarrow
\left\{
    \begin{array}{ll}
        \lambda > \frac{1}{1-\tau_-}\\
        \lambda < \frac{1}{1-\tau_+} \quad \text{if } \tau_+ < 1
    \end{array}
\right..
\end{equation}
We now move on to the second concern, the condition number of $\tM$
comparing to that of $M^2$. Using the notations above, the condition
number of $M^2$ is
\begin{equation*}
    \cond{M^2} = \frac{\bimax{L_+^2}{L_-^2}}{\bimin{\tau_+^2}{\tau_-^2}},
\end{equation*}
and the condition number of $\tM$ is
\begin{equation} \label{eq:cond-tM-lambda}
    \condp{\lambda}{\tM} = \frac{ \bimax{p_\lambda(L_+)}{p_\lambda(L_-)}
    }{ \bimin{p_\lambda(\tau_+)}{p_\lambda(\tau_-)} }.
\end{equation}
The optimal choice $\lambda^* = \argmin_{\lambda \text{ satisfies}
\eqref{eq:tM-psd-cond}}\condp{\lambda}{\tM}$ is difficult to determine.
On the other hand, a simple choice,
\begin{equation*}
    \hlambda = \left\{
        \begin{array}{ll}
            \frac{1}{1-\tau_+-\tau_-} \quad & \text{if } \tau_+ + \tau_-
            < 1-\frac{1}{C}\\
            C & \text{if } \tau_+ + \tau_- \geq 1 - \frac{1}{C}
        \end{array}
        \right.
\end{equation*}
leads to
\begin{equation*}
    \bimin{p_{\hlambda}(\tau_+)}{p_{\hlambda}(\tau_-)} = \left\{
        \begin{array}{ll}
            -\hlambda \tau_+ \tau_- \quad & \text{if } \tau_+ + \tau_- <
            1 - \frac{1}{C}\\
            \hlambda \tau_- (\tau_- - 1 + \frac{1}{C}) & \text{if }
            \tau_+ + \tau_- \geq 1 - \frac{1}{C}
        \end{array}
        \right.,
\end{equation*}
where $C$ is a sufficiently large constant. When $\abs{\tau_-}$ is
orders of magnitude larger than $\tau_+$ and $\bimax{L_+}{L_-} \gg
\abs{\tau_-}$, the condition number $\condp{\hlambda}{\tM}$ is roughly
$\frac{\abs{\tau_-}}{\tau_+}$ times smaller than $\cond{M^2}$. More
extreme example is that when $\abs{\tau_-} \sim \abs{L_-} > L_+
\gg \tau_+$, the condition number of $\tM$ is roughly constant,
$\condp{\hlambda}{\tM} = \order{1}$, whereas the condition number
$\cond{M^2}$ could be gigantic if the ratio $\abs{L_-}/\tau_+$
is gigantic. Figure~\ref{fig:Poly} shows the comparison between
$p_\lambda(M)$ and $M^2$. The largest eigenvalue of $M^2$ is obviously
larger than that of $p_\lambda(M)$, and the smallest eigenvalue of $M^2$
is also smaller than that of $p_\lambda(M)$ (shown in the zoom-in
subfigure). Therefore, in this case, the condition number of $M^2$
is much larger than that of $p_\lambda(M)$. However, when we swap
the position of $\tau_+$ and $\tau_-$, e.g., $\tau_+$ is orders of
magnitude larger than $\abs{\tau_-}$ and $\bimax{L_+}{L_-} \gg \tau_+$,
the condition number $\condp{\hlambda}{\tM}$ could be of the same order
as $\cond{M^2}$ if $\abs{\tau_-} \sim 1$. The limitation comes from
the restricted expression of $p_\lambda(x)$, where the second root must
be smaller than one if $\lambda > 0$.

\begin{figure}[htp]
    \centering
    \includegraphics[width=0.55\textwidth]{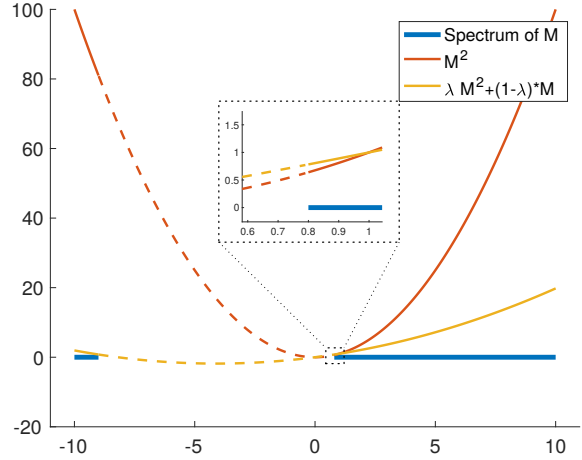}
    \caption{Two quadratic polynomials act on the spectrum of a matrix $M$
    with $L_-=-10$, $\tau_-=-9$, $\tau_+=0.8$ and $L_+=10$.}
    \label{fig:Poly}
\end{figure}

In summary of the above analysis, we observe that the quadratic
polynomial of the matrix, $p_\lambda(M)$ with a careful choice of
$\lambda$, turns $M$ into a positive definite matrix. For a certain
class of matrices, $p_\lambda(M)$ is much well-conditioned than the
traditional technique $M^2$, which is favorable for the later
iterative method. Furthermore, according to the choice of $\hlambda$,
the improvement of the condition number is more significant when
$M$'s close-to-zero eigenvalues are tiled around zero.

\medskip 

\noindent \textbf{Generic quadratic polynomial preconditioning.}
Inspired by the above analysis, we propose a more generic quadratic
polynomial $p_\delta(x) = x^2 - \delta x$ for preconditioning, where
$\delta$ is a parameter playing the similar role as $\lambda$. By
abuse of notation, $\tM = p_\delta(M)$.  Similar as before, $\tM$ and
$M$ share the same eigenvectors and the eigenvalues of $\tM$ are
$p_\delta(m_i) = m_i^2 - \delta m_i$. Since $\delta$ is the second
root of $p_\delta(x)$, $\tau_- < \delta < \tau_+$ guarantees the
positivity of $\tM$. The definition of the condition number of $\tM$
is as \eqref{eq:cond-tM-lambda},
\begin{equation} \label{eq:cond-tM-delta}
    \condp{\delta}{\tM} = \frac{ \bimax{p_\delta(L_+)}{p_\delta(L_-)}
    }{ \bimin{p_\delta(\tau_+)}{p_\delta(\tau_-)} }.
\end{equation}
The optimal choice of $\delta$, $\delta^* = \argmin_{\tau_- < \delta <
\tau_+} \condp{\delta}{\tM}$, is difficult to determine. We adopt the
simple choice $\hdelta = \tau_+ + \tau_-$, leading to
\begin{equation*}
    \bimin{p_{\hdelta}(\tau_+)}{p_{\hdelta}(\tau_-)} = -\tau_- \tau_+.
\end{equation*}
The condition number $\condp{\hdelta}{\tM}$ has similar behavior as
$\condp{\hlambda}{\tM}$ when $\tau_-$ is away from zero and $\tau_+$
is close to zero. Different behavior appears when $\tau_-$ is closer
to zero than $\tau_+$. When $\tau_+$ is orders of magnitude larger than
$\abs{\tau_-}$ and $\bimax{L_+}{L_-} \gg \tau_+$, the condition number
$\condp{\hdelta}{\tM}$ is $\frac{\tau_+}{\abs{\tau_-}}$ times smaller
than $\cond{M^2}$. Therefore $p_\delta(x)$ has broader applicable range
than $p_\lambda(x)$.~\footnote{The behavior of $p_\delta(x)$ can be
achieved by combining $p_\lambda(x)$ and $p_\lambda(-x)$. The choice
of $p_\lambda(x)$ or $p_\lambda(-x)$ depends on spectrum property of
$M$. The resulting numerical method is however more complicated than
using $p_\delta(x)$ alone.} In Figure~\ref{fig:Cond}, we demonstrate the
advantage of $p_{\hdelta}(M)$ over $p_{\hlambda}(M)$ for some matrix $M$.

\begin{figure}[htp]
    \begin{subfigure}[t]{0.48\textwidth}
        \centering
        \includegraphics[width=\textwidth]{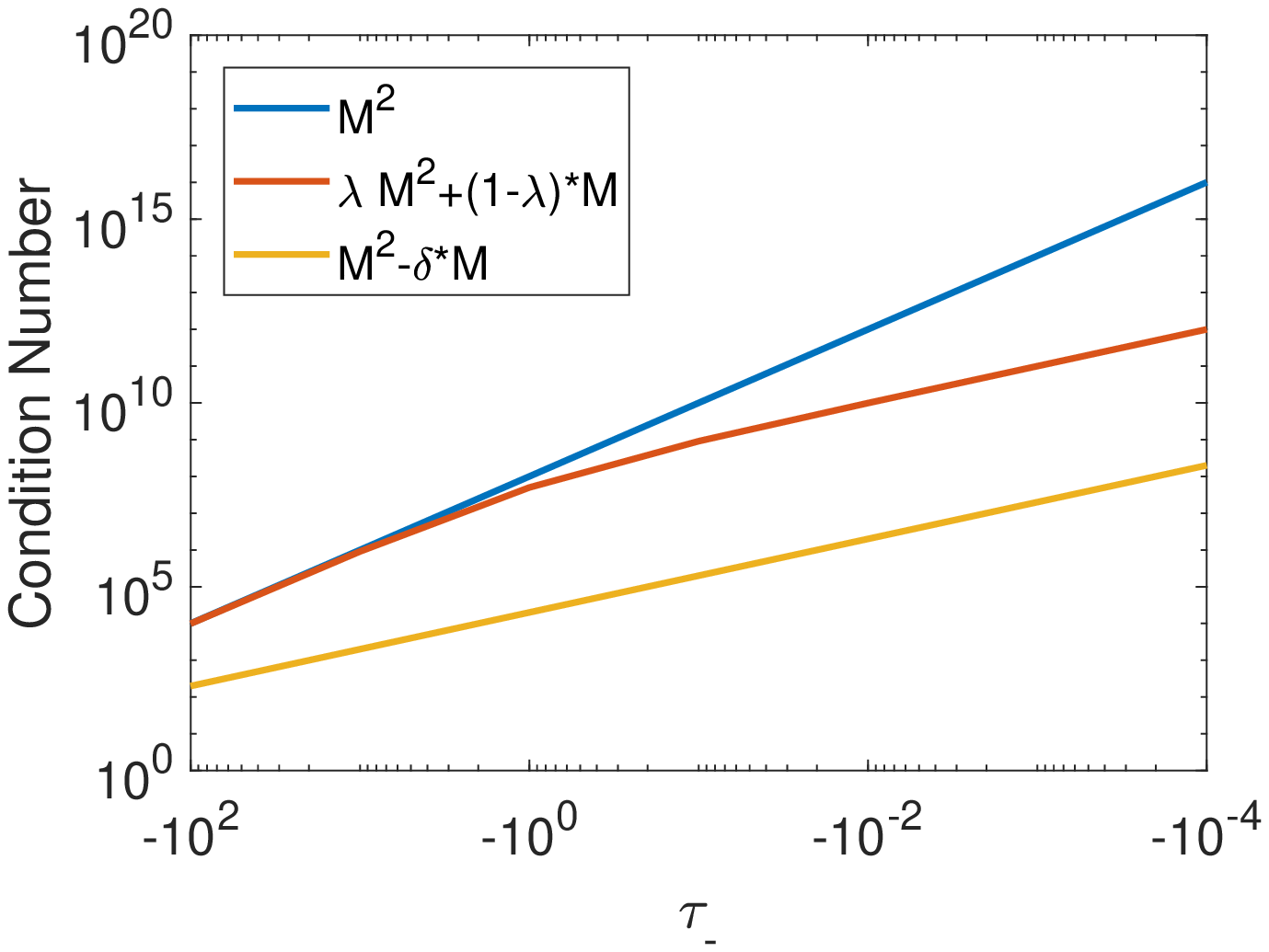}
        \caption{}
        \label{fig:CondTaum}
    \end{subfigure}
    ~
    \begin{subfigure}[t]{0.48\textwidth}
        \includegraphics[width=\textwidth]{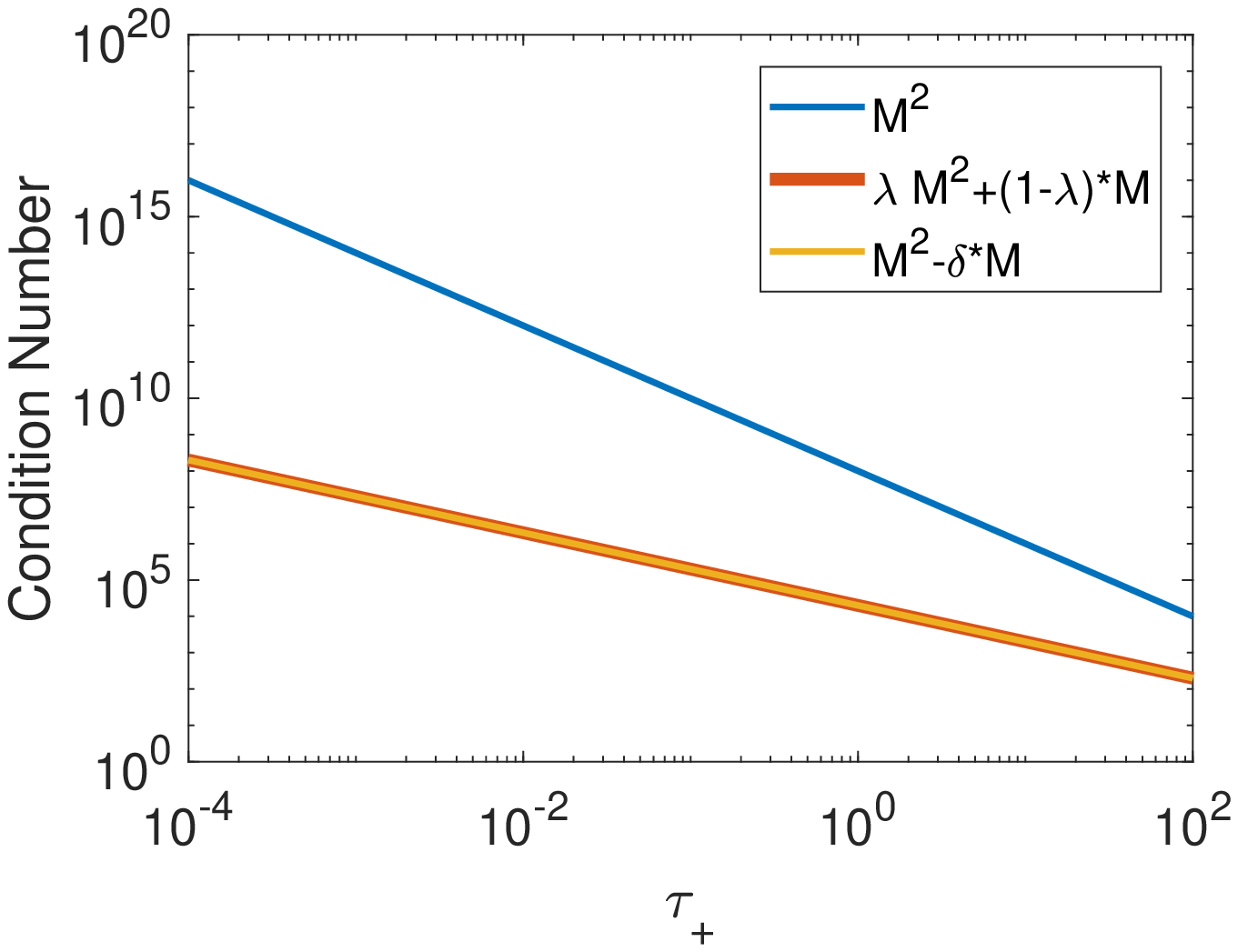}
        \caption{}
        \label{fig:CondTaup}
    \end{subfigure}
    \caption{Condition number of a matrix $M$ with varying asymmetric
    spectrum. In both (a) and (b), the matrix $M$ has fixed $L_+ = 10^4$,
    $L_-=-10^4$. (a) fixes the smallest positive eigenvalue $\tau_+=10^4-1$
    and varies $\tau_-$; (b) fixes the largest negative eigenvalue
    $\tau_-=-10^4+1$ and varies $\tau_+$.} \label{fig:Cond}
\end{figure}

\begin{remark}
    Both $p_{\hdelta}(M)$ and $p_{\hlambda}(M)$ take advantage of the
    asymmetry of the spectrum of $M$. When the spectrum of $M$ is
    symmetric around the origin, i.e., $\tau_+ = \tau_-$ and $L_+ =
    L_-$, the choice of either $p_{\hdelta}(M)$ or $p_{\hlambda}(M)$
    falls back to $M^2$, which has the same condition number as $M^*M$.
\end{remark}

\subsection{BDMC Iterative method}
\label{sec:BDMC-Iter}

In terms of matrix $M$ as in \eqref{eq:BDMC}, the two-step iterative
method in \cite{Prokofev2007} can be written as 
\begin{equation} \label{eq:BDMC-Steps}
    \begin{array}{ll}
        \text{Step 1:} & \ket{\tf_{k+1}} = \ket{b} + (I-M) \ket{f_k}\\
        \text{Step 2:} & \ket{f_{k+1}} = \cfrac{\sum_{j=1}^{k+1} j^t
    \ket{\tf_j}}{\sum_{j=1}^{k+1}j^t}
    \end{array}
\end{equation}
where $t > -1$ is a fixed parameter.~\footnote{The notations have been
changed from \cite{Prokofev2007} as $\alpha \rightarrow t$ and $n
\rightarrow k$ to avoid notation conflicts.} Step 1 is a fixed-point iteration
and Step 2 is a special {N{\o}rlund} mean with sequence $\{j^t\}$. Let
$S_k = \sum_{j=1}^{k}j^t$. We could merge two steps into a single step,
\begin{equation} \label{eq:BDMC-Iter-Init}
    \ket{f_{k+1}} = \frac{1}{S_{k+1}}\ket{f_k} +
    \frac{(k+1)^t}{S_{k+1}}\ket{\tf_{k+1}} = \ket{f_k} -
    \frac{(k+1)^t}{S_{k+1}}\left( M\ket{f_k} - \ket{b} \right).
\end{equation}
For {Hermitian} positive definite matrix $M$, \eqref{eq:BDMC-Iter-Init}
is a gradient descent method for the objective function $\expval{M}{f} -
\bra{b}\ket{f}$ with special stepsize $\alpha_k =
\frac{(k+1)^t}{S_{k+1}}$. Such a stepsize asymptotically behaves as
\begin{equation} \label{eq:stepsize}
    \alpha_k \asymp \frac{t+1}{k+1} \quad (k \rightarrow +\infty).
\end{equation}
In fact, the simpler stepsize choice as \eqref{eq:stepsize} is widely
used in the stochastic gradient descent literature. We would denote
$\beta = t + 1$ in the following note.  The convergence analysis
of the iterative method,
\begin{equation} \label{eq:BDMC-Iter}
    \ket{f_{k+1}} = \ket{f_k} - \frac{\beta}{k+1} \left( M\ket{f_k} -
    \ket{b} \right)
\end{equation}
for both deterministic and stochastic $M$ are listed in the next section.

\subsection{Convergence Analysis}

Let us now turn to the convergence analysis of iterative algorithms
\eqref{eq:BDMC-Iter-Init} and \eqref{eq:BDMC-Iter}. Compared the two, the
analysis of \eqref{eq:BDMC-Iter} would be cleaner due to its simpler
choice of stepsize.  In \cite{Prokofev2007}, {Prokof'ev} and {Svistunov}
provide asymptotic behavior for $\ket{\delta_k} = \ket{f_k} - \ket{f^*}$,
which is the difference between step $k$ result $\ket{f_k}$ and the
underlying truth $\ket{f^*} = M^{-1}\ket{b}$. When $k$ approaches
$+\infty$, $\ket{\delta_k}$ behaves as
\begin{equation} \label{eq:BDMC-asymp}
    \ket{\delta_k} \asymp e^{-\beta M \log{k}} \ket{\delta_1},
\end{equation}
where $\ket{\delta_1} = \ket{f_1} - \ket{f^*}$ and $\ket{f_1}$ is the
initial guess. Since $M$ is a positive definite matrix and $\beta > 0$,
$\ket{\delta_k} \rightarrow \ket{0}$ as $k \rightarrow +\infty$. The
same asymptotic analysis holds for stepsize \eqref{eq:stepsize}
as well.  According to \eqref{eq:BDMC-asymp}, the slowest converging
component behaves as $e^{-\beta \tau_+ \log{k}}$, where $\tau_+$ is the
smallest eigenvalue of $M$. Careful study of the contraction property
of the iterative method shows that either $\beta$ or the number of
non-contraction steps is related to the largest eigenvalue $L_+$ of
$M$. Overall, the smaller condition number of $M$ leads to the faster
convergence.

\begin{remark}
  Based on the above asymptotics, it was suggested in
  \cite{Prokofev2007} the choice of very large $\beta$. In that case,
  for sufficiently large $k$, the asymptotic rate
  \eqref{eq:BDMC-asymp} is achieved. This is however only part of the
  story, as for the iterative method, we are not just interested in
  asymptotic convergence, the actual decay of error after finite
  number of steps is more important. Indeed as we will see in
  Corollary~\ref{corollary}, the hidden prefactor in
  \eqref{eq:BDMC-asymp} depends on $\beta$. In particular, the
  asymptotic analysis fails if $t$ is set as $+\infty$ in the
  iterative method \eqref{eq:BDMC-Iter-Init}. 
  
  Moreover, the asymptotic analysis in \eqref{eq:BDMC-asymp} only
  holds for noise-free matrix $M$. In the stochastic setting, i.e.,
  each evaluation of $M\ket{f}$ involves a stochastic error, we will
  see in Theorem~\ref{thm} that the expected error is dominated by the
  stochastic error part. The choice of large $\beta$ does not impact
  the convergence rate of the iterative method but enlarges the
  prefactor. Therefore, choosing large $\beta$ in the stochastic
  setting actually has negative influence on the convergence.
\end{remark}

The previous asymptotic analysis holds for deterministic matrix $M$. For
stochastic matrix vector multiplication, the analysis is carried out
with assumptions on the bias and variance, we present one possible
convergence result below, following Theorem 4.7 in the review article
\cite{Bottou2016}.
 
Let $G(\ket{f}) = \expval{M}{f} -
\bra{b}\ket{f}$ for Hermitian positive definite matrix $M$. The gradient of
$G(\ket{f})$ is $\grad{G(\ket{f})} = M\ket{f} - \ket{b}$. Hence, both
\eqref{eq:BDMC-Iter-Init} and \eqref{eq:BDMC-Iter} are gradient descent
methods applied to $G(\ket{f})$. In order to distinguish between the
deterministic gradient and stochastic gradient, we denote the stochastic one
as $g(\ket{f},\xi) = M_\xi \ket{f} - \ket{b}$, where $\xi$ is a random
variable. 

\begin{assumption} \label{assump}
    The objective function and stochastic gradient method as
    \eqref{eq:BDMC-Iter} satisfies the following:
    \begin{enumerate}
        \item There exist scalars $\mu_G \geq \mu > 0$ such that, for all
            $k \in \bbN$,
            \begin{equation*}
                \begin{split}
                    & \grad{G(\ket{f_k})}^*\expect{g(\ket{f_k},\xi_k)} \geq
                    \mu \norm{\grad{G(\ket{f_k})}}_2^2,\quad \text{and}\\
                    & \norm{\expect{g(\ket{f_k},\xi_k)}}_2 \leq \mu_G
                    \norm{\grad{G(\ket{f_k})}}_2.
                \end{split}
            \end{equation*}
        \item There exist scalars $W \geq 0$ and $W_V \geq 0$ such that, for
            all $k \in \bbN$,
            \begin{equation*}
                \varance{g(\ket{f_k},\xi_k)} =
                \expect{\norm{g(\ket{f_k},\xi_k)}_2^2} -
                \norm{\expect{g(\ket{f_k},\xi_k)}}_2^2 \leq W + W_V
                \norm{\grad{G(\ket{f_k})}}_2^2.
            \end{equation*}
    \end{enumerate}
\end{assumption}

Assumption~\ref{assump} follows Assumption 4.3 in
\cite{Bottou2016}. When $g(\ket{f_k},\xi_k)$ is an unbiased estimator
of $\grad{G(\ket{f_k})}$, both $\mu_G$ and $\mu$ are one. We recall
the notations $\tau = \tau_+$ and $L = L_+$ as the smallest and
largest eigenvalue of $M$ respectively. Moreover, we denote
$W_G = W_V + \mu_G^2$.

\begin{theorem} \label{thm}
    Under Assumption~\ref{assump}, suppose that the stochastic gradient
    method is run with a stepsize sequence such that for all $k \in \bbN$,
    \begin{equation*}
        \alpha_k = \frac{\beta}{\gamma + k}, \quad \beta > \frac{1}{\tau
        \mu}, \quad \text{and } \gamma > 0 \quad \text{such that } \alpha_1
        \leq \frac{\mu}{LW_G}.
    \end{equation*}
    Then, for all $k \in \bbN$, the expected optimality gap satisfies
    \begin{equation} \label{eq:thm}
        \expect{G(\ket{f_k}) - G_*} \leq \frac{1}{(\gamma + k)^{\beta \tau
        \mu}} \left(G(\ket{f_1}) - G_*\right) (\gamma + 1)^{\beta \tau \mu}
        + \frac{1}{\gamma + k} \frac{ \beta^2 L W }{\beta \tau \mu - 1 }
    \end{equation}
    where $G_* = \inf_{\ket{f}}G(\ket{f})$.
\end{theorem}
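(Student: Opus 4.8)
The plan is to follow the template of Theorem~4.7 in \cite{Bottou2016}: reduce the vector iteration \eqref{eq:BDMC-Iter} to a scalar recursion in the expected optimality gap, and then solve that recursion by induction on $k$. Throughout I write $a_k = \expect{G(\ket{f_k}) - G_*}$ and abbreviate $\rho = \beta\tau\mu$; note that the hypothesis $\beta > \frac{1}{\tau\mu}$ is precisely $\rho > 1$, which will be needed at the very end. I also use that $G$ is a quadratic with Hessian $M$, so its gradient $\grad{G(\ket{f})} = M\ket{f} - \ket{b}$ is Lipschitz with constant $L = L_+$ and its Taylor expansion is exact.

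First I would record the one-step descent bound. From the update $\ket{f_{k+1}} = \ket{f_k} - \alpha_k g(\ket{f_k},\xi_k)$ and $L$-smoothness,
\begin{equation*}
  G(\ket{f_{k+1}}) \leq G(\ket{f_k}) - \alpha_k \grad{G(\ket{f_k})}^* g(\ket{f_k},\xi_k) + \tfrac{L}{2}\alpha_k^2 \norm{g(\ket{f_k},\xi_k)}_2^2 .
\end{equation*}
Taking expectation over $\xi_k$ conditioned on $\ket{f_k}$, the linear term is controlled by the first inequality of Assumption~\ref{assump}, while the quadratic term is handled by writing $\expect{\norm{g}_2^2} = \varance{g} + \norm{\expect{g}}_2^2$ and combining the variance bound of Assumption~\ref{assump} with its second inequality; recalling $W_G = W_V + \mu_G^2$ this gives $\expect{\norm{g(\ket{f_k},\xi_k)}_2^2} \leq W + W_G \norm{\grad{G(\ket{f_k})}}_2^2$. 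Using the stepsize constraint $\alpha_k \leq \alpha_1 \leq \frac{\mu}{L W_G}$ to absorb half of the gradient term yields the conditional bound
\begin{equation*}
  \expect{G(\ket{f_{k+1}}) \mid \ket{f_k}} - G(\ket{f_k}) \leq -\tfrac{\mu}{2}\alpha_k \norm{\grad{G(\ket{f_k})}}_2^2 + \tfrac{L}{2}\alpha_k^2 W .
\end{equation*}

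Next I would invoke strong convexity of the quadratic $G$: since the smallest eigenvalue of $M$ is $\tau$, the Polyak--{\L}ojasiewicz inequality $\norm{\grad{G(\ket{f})}}_2^2 \geq 2\tau (G(\ket{f}) - G_*)$ holds. Applying it to the negative gradient term, subtracting $G_*$, substituting $\alpha_k = \frac{\beta}{\gamma+k}$, and taking total expectation converts the estimate above into the scalar recursion
\begin{equation*}
  a_{k+1} \leq \Bigl(1 - \frac{\rho}{\gamma+k}\Bigr) a_k + \frac{\beta^2 L W}{2(\gamma+k)^2} .
\end{equation*}
It then remains to show this recursion propagates the claimed bound $a_k \leq \frac{(\gamma+1)^\rho}{(\gamma+k)^\rho}(G(\ket{f_1})-G_*) + \frac{1}{\gamma+k}\frac{\beta^2 LW}{\rho-1}$ by induction. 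The base case $k=1$ is immediate: the first summand then equals $G(\ket{f_1})-G_*$ exactly and the second is nonnegative.

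The crux, and the step I expect to be the main obstacle, is the inductive step, which I would split along the two terms of the ansatz. For the power-law term I must verify $\frac{1}{(\gamma+k)^\rho}(1-\frac{\rho}{\gamma+k}) \leq \frac{1}{(\gamma+k+1)^\rho}$; writing $s = \gamma+k$ this reads $1 - \frac{\rho}{s} \leq (1 - \frac{1}{s+1})^\rho$, which follows from Bernoulli's inequality $(1-\frac{1}{s+1})^\rho \geq 1 - \frac{\rho}{s+1} \geq 1 - \frac{\rho}{s}$ and uses $\rho \geq 1$. For the noise-floor term, with $B = \frac{\beta^2 LW}{\rho-1}$, I must check $\frac{B}{s} - \frac{\rho B}{s^2} + \frac{\beta^2 LW}{2 s^2} \leq \frac{B}{s+1}$; after clearing $s^2$ this reduces to $\frac{\beta^2 LW}{2} \leq B(\rho - \frac{s}{s+1})$, which holds since $\rho - \frac{s}{s+1} > \rho - 1$ and $B(\rho-1) = \beta^2 LW \geq \frac{\beta^2 LW}{2}$. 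Both estimates hinge on $\rho = \beta\tau\mu > 1$, which is exactly where the hypothesis on $\beta$ enters and which also keeps the prefactor $\frac{1}{\rho-1}$ finite. Adding the two verified pieces closes the induction and establishes \eqref{eq:thm}.
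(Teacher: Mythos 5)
Your proof is correct and follows essentially the same route as the paper: you arrive at the one-step recursion $\expect{G(\ket{f_{k+1}})} - G_* \leq (1-\alpha_k\tau\mu)(\expect{G(\ket{f_k})}-G_*) + \tfrac{1}{2}\alpha_k^2 LW$ (which the paper simply cites as (4.23) of \cite{Bottou2016}, while you re-derive it from $L$-smoothness, Assumption~\ref{assump}, and the Polyak--{\L}ojasiewicz inequality) and then close the same induction on $k$. Your verification of the two inductive inequalities via Bernoulli's inequality and the bound $B(\rho - \tfrac{s}{s+1}) > B(\rho-1) = \beta^2 LW$ is in fact more explicit than the paper's brief appeal to a Taylor expansion and $\tfrac{\hk-1}{\hk^2} < \tfrac{1}{\hk+1}$.
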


\begin{proof}
    Starting from (4.23) in \cite{Bottou2016}, we have,
    \begin{equation} \label{eq:proof}
        \expect{G(\ket{f_{k+1}})} - G_* \leq (1 - \alpha_k \tau \mu) \left(
        \expect{G(\ket{f_k})} - G_* \right) + \frac{1}{2} \alpha_k^2 L W.
    \end{equation}
    Equation~\eqref{eq:thm} holds for $k=1$. Then assuming \eqref{eq:thm} is
    true for $k$, it follows from \eqref{eq:proof} that, ($\hk = \gamma +
    k$),
    \begin{equation*}
        \begin{split}
            \expect{G_{k+1}} - G_* \leq & \left(1 - \frac{\beta \tau
            \mu}{\hk}\right) \hk{}^{-\beta \tau \mu} \left( G(\ket{f_1}) -
            G_*\right) (\gamma + 1)^{\beta \tau \mu}\\
            & + \left( 1 - \frac{\beta \tau \mu}{\hk} \right) \frac{1}{\hk}
            \frac{\beta^2 L W}{\beta \tau \mu - 1} + \frac{1}{2}
            \frac{\beta^2}{\hk{}^2} L W\\
            \leq & \frac{1}{(\hk + 1)^{\beta \tau \mu}} \left( G(\ket{f_1})
            - G_* \right) (\gamma + 1)^{\beta \tau \mu} + \frac{1}{\hk + 1}
            \frac{\beta^2 L W}{\beta \tau \mu -1}
        \end{split}
    \end{equation*}
    where the last inequality dues to Taylor expansion of $(\hk + 1)^{-\beta
    \tau \mu}$ at $\hk$ and $\frac{\hk - 1}{\hk{}^2} < \frac{1}{\hk + 1}$.
\end{proof}

Theorem~\ref{thm} now split the bound into the convergence of the initial
error and stochastic error. Due to the assumption $\beta > \frac{1}{\tau
\mu}$, the expected optimality gap is dominated by the stochastic error,
which behaves as $O(\frac{1}{\gamma + k})$. At the same time, both the
prefactor $\frac{\beta^2 L W}{\beta \tau \mu -1}$ and the parameter of
initial step size $\gamma \geq \frac{L W_G}{\tau \mu^2} -1$ relies on the
condition number of $M$, i.e., $\frac{L}{\tau}$. Therefore, the smaller
condition number of $M$ leads to the faster convergence in stochastic
gradient descent method. A direct corollary can be derived for
non-stochastic gradient descent method, where $\mu = \mu_G = 1, W = 0, W_G
= 1$.

\begin{corollary} \label{corollary}
    Suppose the gradient descent method is run with a stepsize sequence such
    that, for all $k \in \bbN$,
    \begin{equation*}
        \alpha_k = \frac{\beta}{\gamma + k}, \quad \text{and } \gamma > 0
        \quad \text{such that } \alpha_1 \leq \frac{1}{L}.
    \end{equation*}
    Then, for all $k \in \bbN$, the optimality gap satisfies
    \begin{equation*}
        G(\ket{f_k}) - G_* \leq \frac{1}{(\gamma + k)^{\beta \tau}} \left(
        G(\ket{f_1}) - G_* \right) (\gamma + 1)^{\beta \tau}.
    \end{equation*}
\end{corollary}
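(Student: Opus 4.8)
The plan is to specialize Theorem~\ref{thm} to the deterministic, unbiased, noise-free setting and verify that the hypotheses collapse as claimed. In the deterministic gradient descent case, the gradient $g(\ket{f_k},\xi_k) = \grad{G(\ket{f_k})} = M\ket{f_k}-\ket{b}$ is exact, so both inequalities in Assumption~\ref{assump}(1) hold with equality and yield $\mu = \mu_G = 1$. Since there is no stochastic error, the variance vanishes, giving $W = 0$ and (with no need for a multiplicative variance bound) $W_V = 0$, so that $W_G = W_V + \mu_G^2 = 1$. First I would record these substitutions explicitly and confirm the stepsize hypotheses match: the condition $\beta > \tfrac{1}{\tau\mu}$ becomes $\beta > \tfrac{1}{\tau}$, and the condition $\alpha_1 \le \tfrac{\mu}{L W_G}$ becomes $\alpha_1 \le \tfrac{1}{L}$, exactly as stated in the Corollary.

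Next I would substitute these parameter values into the bound~\eqref{eq:thm}. The second term, carrying the prefactor $\tfrac{\beta^2 L W}{\beta\tau\mu - 1}$, has $W = 0$ in the numerator and therefore vanishes identically. The first term simplifies by replacing $\beta\tau\mu$ with $\beta\tau$, leaving
\begin{equation*}
    G(\ket{f_k}) - G_* \leq \frac{1}{(\gamma+k)^{\beta\tau}}\left(G(\ket{f_1})-G_*\right)(\gamma+1)^{\beta\tau},
\end{equation*}
which is precisely the claimed inequality. Because the method is deterministic, the expectation operators in Theorem~\ref{thm} are vacuous and may be dropped, so $\expect{G(\ket{f_k})-G_*}$ reduces to $G(\ket{f_k})-G_*$.

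A subtle point worth checking is the hypothesis $\beta > \tfrac{1}{\tau}$, which is needed in Theorem~\ref{thm} to keep $\beta\tau\mu - 1 > 0$ so that the stochastic-error denominator is positive. In the deterministic Corollary this denominator no longer appears (its whole term is zero), so one might worry the constraint $\beta > \tfrac{1}{\tau}$ is superfluous; I would note that it is in fact implicitly retained only through $\gamma$, since the requirement $\alpha_1 = \tfrac{\beta}{\gamma+1} \le \tfrac{1}{L}$ forces $\gamma \ge \beta L - 1$, and no separate lower bound on $\beta$ is imposed in the Corollary's statement. I would therefore present the Corollary as a clean restriction of the theorem, dropping the now-inactive constraint, and remark that the decay exponent $\beta\tau$ and the absence of any floor on the rate make the role of large $\beta$ transparent: it accelerates the asymptotic rate but, via the prefactor $(\gamma+1)^{\beta\tau}$ and the lower bound $\gamma \ge \beta L - 1$, inflates the initial error constant.

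The main obstacle here is not analytical difficulty — the proof is a direct parameter specialization of an already-established theorem — but rather ensuring that every hypothesis of Theorem~\ref{thm} genuinely degenerates consistently in the noise-free limit, particularly that setting $W=0$ does not create a vacuous or ill-defined denominator and that the stepsize admissibility condition $\alpha_1 \le \tfrac{\mu}{L W_G}$ reduces exactly to $\alpha_1 \le \tfrac{1}{L}$. Once these are confirmed, the stated bound follows immediately by substitution.
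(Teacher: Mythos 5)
Your proposal is correct and matches the paper's own (implicit) argument exactly: the paper derives the corollary precisely by specializing Theorem~\ref{thm} with $\mu = \mu_G = 1$, $W = 0$, $W_G = 1$, so the noise term vanishes and the stepsize condition reduces to $\alpha_1 \leq \tfrac{1}{L}$. Your additional observation that the constraint $\beta > \tfrac{1}{\tau\mu}$ becomes inactive once $W = 0$ is a correct and slightly more careful remark than the paper makes, but it does not change the route.
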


Corollary~\ref{corollary} coincides with \eqref{eq:BDMC-asymp}.  The
impact of the condition number of $M$ to the convergence is more
significant in the non-stochastic gradient descent method. The
constant $(\gamma+1)^{\beta \tau}$ and parameter $\gamma$ are
influenced by the condition number in a similar way as the stochastic
one. The rate of the convergence of the non-stochastic gradient
descent method is also impacted by the smallest eigenvalue of $M$. In
general, the larger of $\tau$ leads to faster convergence rate of the
gradient descent method. Such an argument agrees with the asymptotic
analysis of \eqref{eq:BDMC-asymp}.


\section{Alternative stochastic iterative methods}
\label{sec:SGDs}

Gradient descent method with stochastic gradient is widely explored in
many areas. Especially, machine learning researchers established many
variant stochastic gradient descent methods to minimize the loss
function in a big data setting. The raise of deep learning further
accelerates the development of stochastic gradient descent methods. In
these context, many loss functions are non-convex
functions. Stochastic gradient descent methods, without accessing full
gradient at each step, samples a few components of the full gradient
and move along the sampled gradient direction. This strategy reduces
the computational cost each step and potentially avoids many local
minima. The problem BDMC addresses, in contrast with deep learning,
has a quadratic convex objective function. On the other hand, the gradient of
the objective function in BDMC can only be accessed via a \MC{}
procedure. The underlying true gradient is unknown. 
Here we would like to compare BDMC with a few well-established
stochastic gradient descent methods from machine learning
literature. The cross-fertilization between machine learning and
computational physics is rather natural since both are dealing with
high dimensional problems. In particular, diagrammatic summation
methods like BDMC could potentially benefit from other stochastic
iterative methods to either improve convergence or allow larger \MC{}
error.

\subsection{Heavy ball method}
\label{sec:HB}

The heavy ball method~\cite{Polyak1964} adds a momentum term to the gradient
descent method,
\begin{equation} \label{eq:HeavyBall}
    \ket{f_{k+1}} = \ket{f_k} - \alpha ( M \ket{f_k} - \ket{b} ) +
    \beta \left( \ket{f_k} - \ket{f_{k-1}} \right),
\end{equation}
where $\alpha$ is the stepsize and $\beta$ is the weight for
momentum. Heavy ball method actually has the same convergence rate as the
gradient descent method. Unlike gradient descent method which depends on the
condition number of $M$, the heavy ball method depends on the square root of
the condition number of $M$. Such a property is attractive when $M$ is
ill-conditioned. However, due to the momentum, the heavy ball method is not
strictly decreasing, i.e., $G(\ket{f_{k+1}})$ is not necessarily smaller than
$G(\ket{f_k})$.

\subsection{Stochastic {B}arzilai-{B}orwein method}
\label{sec:sBB}

In 1988, {Barzilai} and {Borwein} proposed a two-point step size gradient
method~\cite{Barzilai1988}, which is inspired by the secant equation
underlying quasi-Newton methods. The stochastic version of the Barzilai and
Borwein method (sBB)~\cite{Tan2016}, instead of updating the stepsize every
iteration with the difference of stochastic gradients, updates the stepsize
every $m$ iteration with the difference of aggregated gradients. The
detailed iteration is as follows,
\begin{equation} \label{eq:BB-stepsize}
    \begin{split}
        & \alpha_k = \left\{
            \begin{array}{ll}
                \alpha_{k-1} & \text{if } k \not\equiv 0 \mod m\\
                \frac{1}{m}\frac{ \bra{\Delta_k^m f}\ket{\Delta_k^m f} }
                { \bra{\Delta_k^m f}\ket{\Delta_k^m g} }
                & \text{if } k \equiv 0 \mod m\\
            \end{array}    
        \right.\\
        & \ket{f_{k+1}} = \ket{f_{k}} - \alpha_k \grad{G(\ket{f_{k}})}\\
        & \ket{g_{k+1}} = \beta \grad{G(\ket{f_{k}})} + (1-\beta) \ket{g_k}
    \end{split}
\end{equation}
where $\ket{\Delta_k^m f} = \ket{f_{k}} - \ket{f_{k-m}}$ and
$\ket{\Delta_k g} = \ket{g_{k}} - \ket{g_{k-m}}$. $\beta$ is the weight for
momentum and $m$ is the updating frequency. Notice that in \cite{Tan2016}, a
smoothing technique is suggested for the stepsize, which is a technique
enforce diminishing stepsize. According to our tests, this technique is
crucial for convergence when the gradient is noisy. Therefore, our
implementation of sBB adopts the smoothing technique.

\subsection{AdaGrad method}
\label{sec:AdaGrad}

AdaGrad method~\cite{Duchi2011} approximates the Hessian of $G$ by a
diagonal matrix, and sets the stepsize in the gradient method as the
inverse of the diagonal approximation. The iteration of AdaGrad method is
\begin{equation} \label{eq:AdaGrad}
    \ket{f_{k+1}} = \ket{f_k} - \alpha
    \diag{\Gamma_k}^{-1/2}\grad{G(\ket{f_k})},
\end{equation}
where $\Gamma_k = \sum_{i=1}^k \left( \grad{G(\ket{f_k})} \right)^2$ and
the square is an entry-wise operation, $\alpha$ is a parameter. Since
the computational costs for both the inverse of the square root of a
diagonal matrix and the diagonal matrix vector multiplication are the
same as generating the gradient vector.  Therefore, such a choice of
stepsize only increase the computational cost by a small constant,
while the convergence could be accelerate for stochastic gradients.

\subsection{ADAM method}
\label{sec:ADAM}

ADAM method~\cite{Kingma2015} is a upgraded version of AdaGrad method.
Instead of using raw gradient $\grad{G(\ket{f_k})}$ and $\Gamma_k$
as in \eqref{eq:AdaGrad}, ADAM method adds momentum parts for both and
correct the biases. The calculation of its stepsize is more complicated
than all pre-mentioned methods. We summarize the calculation as follows,
\begin{equation}
    \begin{split}
        & \ket{g_{k+1}} = \grad{G(\ket{f_{k}})}\\
        & \ket{m_{k+1}} = \beta_1 \ket{m_k} + (1-\beta_1) \ket{g_{k+1}}
        ~\text{[Update biased first moment]}\\
        & \ket{v_{k+1}} = \beta_2 \ket{v_k} + (1-\beta_2) \ket{g_{k+1}}^2
        ~\text{[Update biased second moment]}\\
        & \ket{\hm{}_{k+1}} = \ket{m_{k+1}}/(1 - \beta_1^{k+1})
        ~\text{[Correct biased first moment]}\\
        & \ket{\hv{}_{k+1}} = \ket{v_{k+1}}/(1 - \beta_2^{k+1})
        ~\text{[Correct biased second moment]}\\
        & \ket{f_{k+1}} = \ket{f_{k}} - \alpha
        \hm{}_{k+1}/(\sqrt{\hv{}_{k+1}} + \epsilon)\\
    \end{split}
\end{equation}
with initial first moment vector $\ket{m_1} = \ket{0}$ and second
moment vector $\ket{v_1} = \ket{0}$, where $\beta_1$ and $\beta_2$
are weights for the first and second moment respectively, $\alpha$ is
a parameter. These initial first and second moment are crucial for the
bias correction parts. Although the iterative scheme looks much more
complicated than before, the computational cost is about 2-3 times as
much as that of AdaGrad method.


\section{Numerical Results}
\label{sec:NumRes}

This section focus on testing the performances of different gradient
descent methods mentioned in Section~\ref{sec:BDMC-Iter} and
Section~\ref{sec:SGDs}. The power of the preconditioning technique has
been illustrated in Figure~\ref{fig:Cond}, and we would not retest it
here. The algorithms of different gradient descent methods are implemented
in MATLAB 2017b and the results reported here are obtained on a MacBook
Pro with 2.3 GHz Intel Core i7 and 8 GB memory.

For simplicity, we will use the shorten name as in
Table~\ref{tab:sname} instead of the original full name. The accuracy
of all iterative methods is measured against the underlying true
solution $\ket{f^*}$ as,
\begin{equation} \label{eq:relerr}
    e_k^{rel} = \frac{\norm{\ket{f_k} - \ket{f^*}}}{\norm{f^*}},
\end{equation}
where $\ket{f_k}$ is the solution at $k$th step and $e_k^{rel}$ is
called the relative error at $k$th step.

\begin{table}[htp]
    \centering
    \begin{tabular}{ccc}
        \toprule
        Short Name & Full Name & Scheme \\
        \toprule
        GD & Gradient descent method & \\
        BDMC & Bold Diagrammatic \MC{} method & Section~\ref{sec:BDMC-Iter}
        \eqref{eq:BDMC-Steps} \\
        BDMC2 & Bold Diagrammatic \MC{} method & Section~\ref{sec:BDMC-Iter}
        \eqref{eq:BDMC-Iter} \\
        HB & Heavy ball method & Section~\ref{sec:HB} \\
        sBB & Stochastic Barzilai-Borwein method & Section~\ref{sec:sBB} \\
        AdaGrad & Adaptive gradient method &
        Section~\ref{sec:AdaGrad} \\
        ADAM & Adaptive moment estimation method &
        Section~\ref{sec:ADAM} \\
        \bottomrule
    \end{tabular}
    \caption{Name convention for iterative methods.}
    \label{tab:sname}
\end{table}

One example is to simulate the DMC by noisy symmetric positive definite
matrices $M$ of size 100 by 100. We first generate the simulating system as
follows,
\begin{equation}
    M = Q
    \begin{bmatrix}
        1 & & & \\
        & 2 & & \\
        & & \ddots & \\
        & & & 100
    \end{bmatrix}
    Q^*,
\end{equation}
where $Q$ is a random unitary matrix. Then a underlying true solution
$\ket{f^*}$ is generated from normal distribution. The vector $\ket{b}$,
therefore, is the multiplication of $M$ and $\ket{f^*}$, i.e., $\ket{b} = M
\ket{f^*}$. In order to simulate the uncertainty of the DMC, we added noise
to each entry of $M\ket{f}$, i.e.,
\begin{equation}
    g(\ket{f},\xi) = M \ket{f} - \ket{b} + \epsilon \ket{\xi},
\end{equation}
where $\ket{\xi}$ is a random vector with each entry drawn from standard
normal distribution and $\epsilon$ is the noise level.

\begin{table}[htp]
    \centering
    \begin{tabular}{c|c|c|c|cc|cc|c|ccc}
        \toprule
        & GD & BDMC & BDMC2 & \multicolumn{2}{c}{HB} &
        \multicolumn{2}{|c|}{sBB} & AdaGrad & \multicolumn{3}{c}{ADAM} \\
        $\epsilon$ & $\alpha$ & $t$ & $\beta$ & $\alpha$ & $\beta$ & $m$ &
        $\beta$ & $\alpha$ & $\alpha$ & $\beta_1$ & $\beta_2$ \\
        \toprule
        $0.01$ & $  0.005$ & $-0.6$ & $0.4$ & $   0.01$ & $0.6$ & $50$ &
        $0.9$ & $0.9$ & $  0.1$ & $0.99$ & $0.999$ \\
        $ 0.1$ & $  0.001$ & $-0.6$ & $0.4$ & $  0.001$ & $0.6$ & $50$ &
        $0.4$ & $0.4$ & $ 0.01$ & $0.99$ & $0.999$ \\
        $   1$ & $ 0.0005$ & $-0.5$ & $0.5$ & $ 0.0005$ & $0.6$ & $30$ &
        $0.5$ & $0.2$ & $0.001$ & $0.99$ & $0.999$ \\
        $  10$ & $ 0.0003$ & $-0.5$ & $0.5$ & $ 0.0001$ & $0.6$ & $40$ &
        $0.5$ & $0.2$ & $0.002$ & $0.99$ & $0.999$ \\
        $ 100$ & $0.00008$ & $-0.4$ & $0.5$ & $0.00005$ & $0.6$ & $50$ &
        $0.4$ & $0.2$ & $0.002$ & $0.99$ & $0.999$ \\
        \bottomrule
    \end{tabular}
    \caption{``Close-to-optimal'' parameters of iterative methods for
    the first example.}
    \label{tab:opt-paras-ex1}
\end{table}

Each different gradient descent method in Section~\ref{sec:BDMC-Iter} and
Section~\ref{sec:SGDs} has some parameters in common, maximum number of
iteration is $10000$, convergence tolerance is $10^{-6}$, initial guess
$\ket{f_1}$ is a random vector with entry drawn from standard normal
distribution (except that ADAM always starts from $\ket{0}$). Besides
these common parameters, these methods have their own parameters requiring
tuning. For each method, we tried different settings and summarize the
close-to-optimal parameter in Table~\ref{tab:opt-paras-ex1} up to one
significant digits. ``Close-to-optimal'' is in the sense that the averaged
relative error $\frac{\norm{\ket{f_{10000}}-\ket{f^*}}}{\norm{\ket{f^*}}}$
of 10 runs is minimized.

\begin{figure}[htp]
    \centering
    \begin{subfigure}[t]{0.48\textwidth}
        \includegraphics[width=\textwidth]{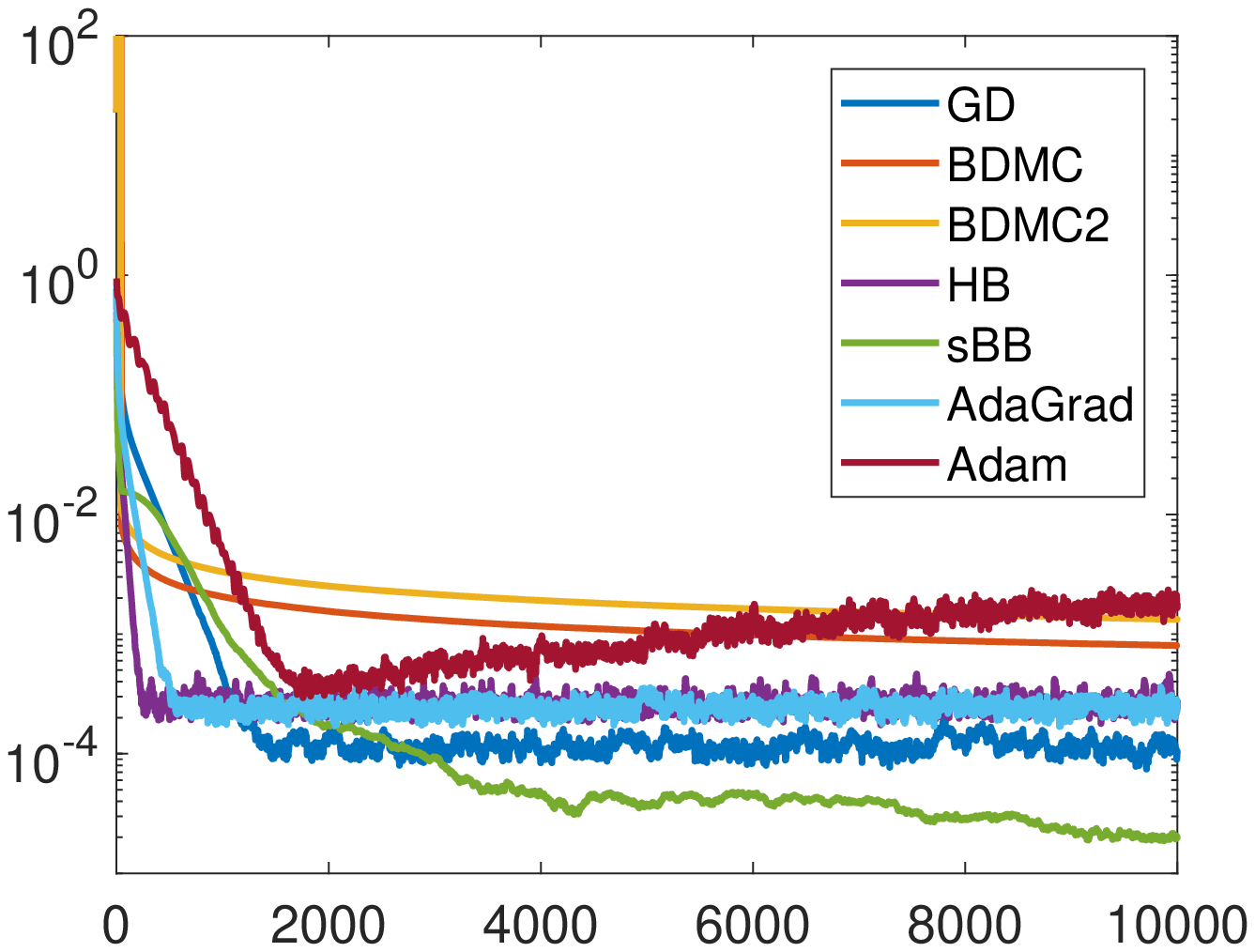}
        \caption{$\epsilon = 0.01$}
    \end{subfigure}
    ~
    \begin{subfigure}[t]{0.48\textwidth}
        \includegraphics[width=\textwidth]{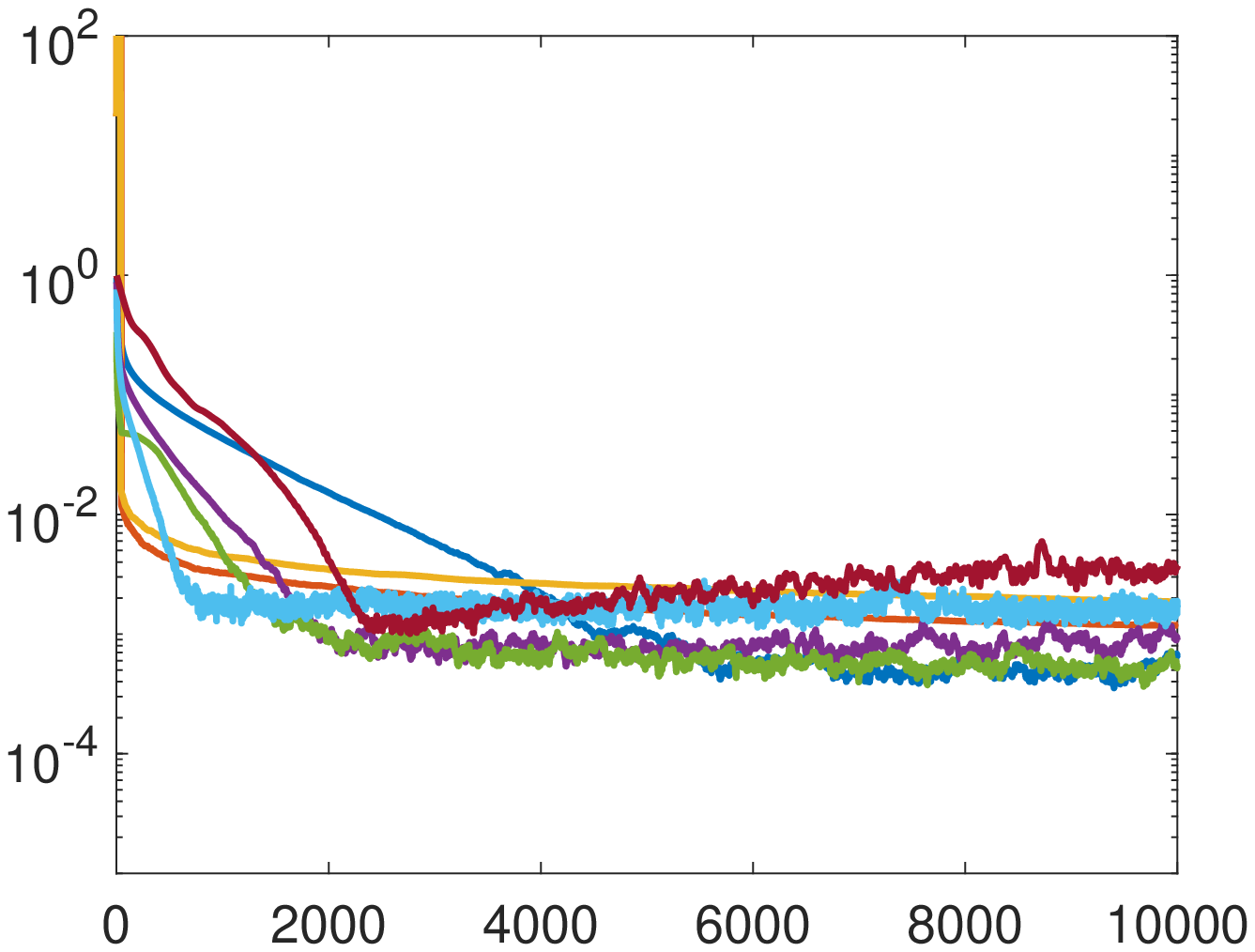}
        \caption{$\epsilon = 0.1$}
    \end{subfigure}
    \\ 
    \begin{subfigure}[t]{0.48\textwidth}
        \includegraphics[width=\textwidth]{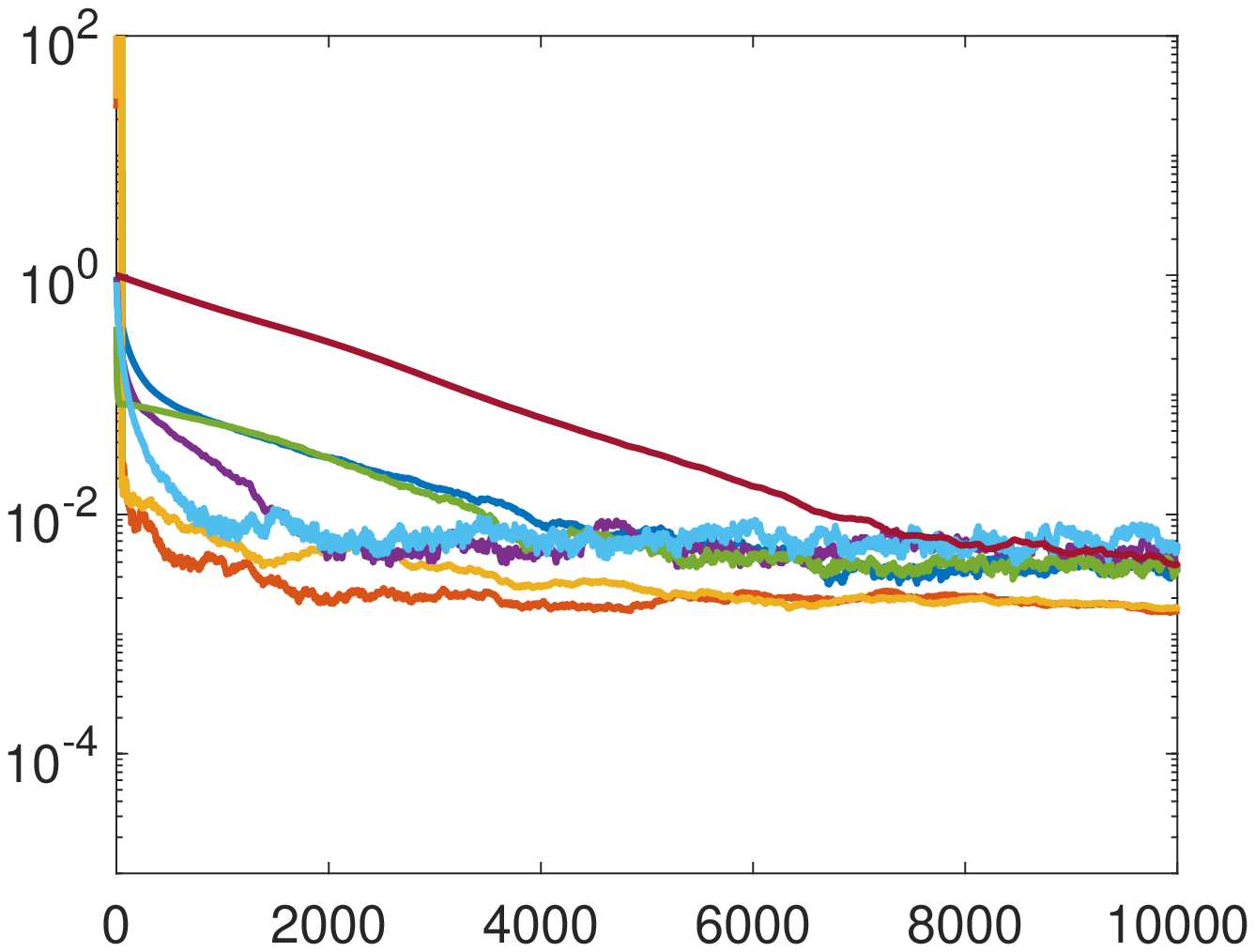}
        \caption{$\epsilon = 1$}
    \end{subfigure}
    ~
    \begin{subfigure}[t]{0.48\textwidth}
        \includegraphics[width=\textwidth]{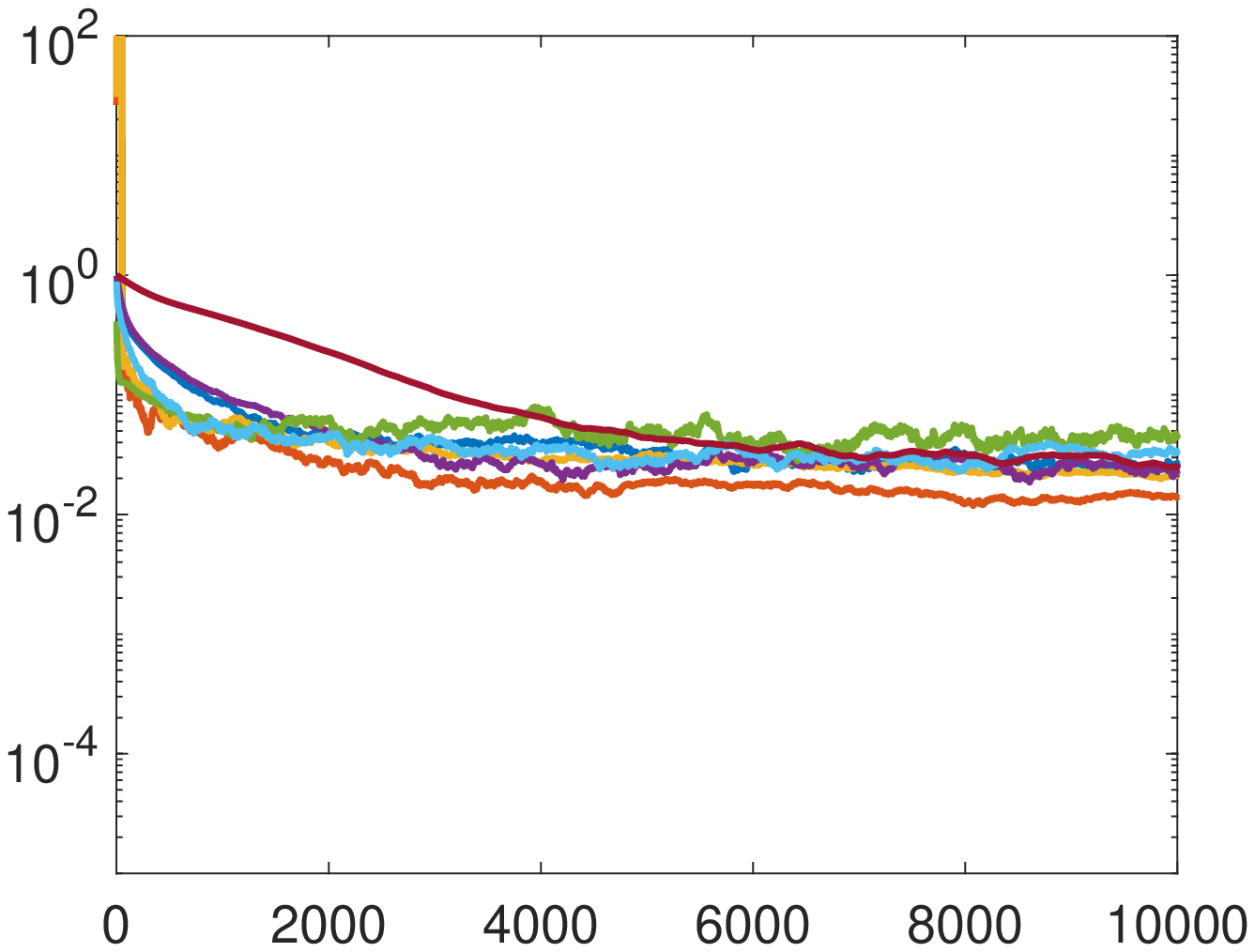}
        \caption{$\epsilon = 10$}
    \end{subfigure}
    \\
    \begin{subfigure}[t]{0.48\textwidth}
        \includegraphics[width=\textwidth]{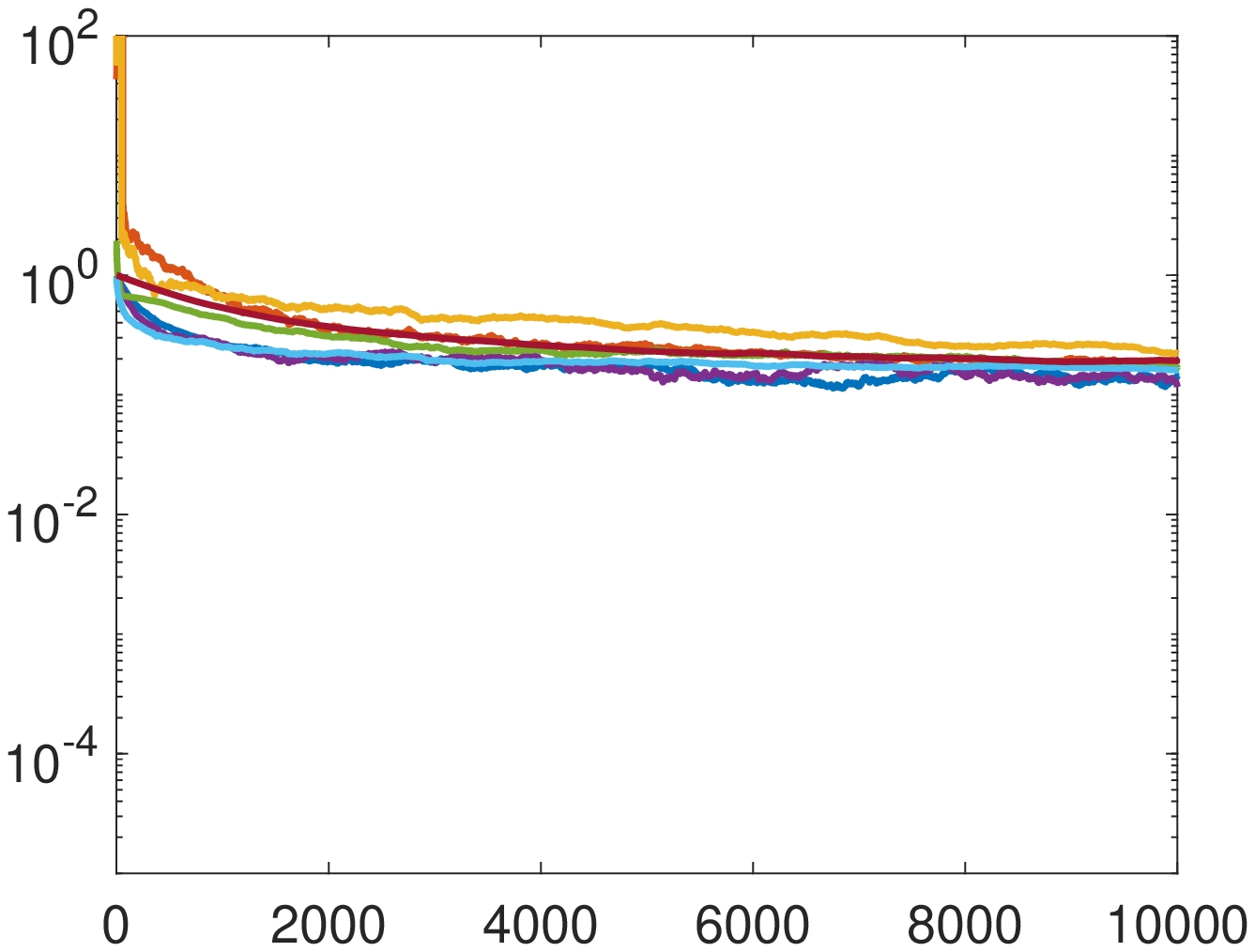}
        \caption{$\epsilon = 100$}
    \end{subfigure}
    \begin{subfigure}[t]{0.48\textwidth}
        \includegraphics[width=\textwidth]{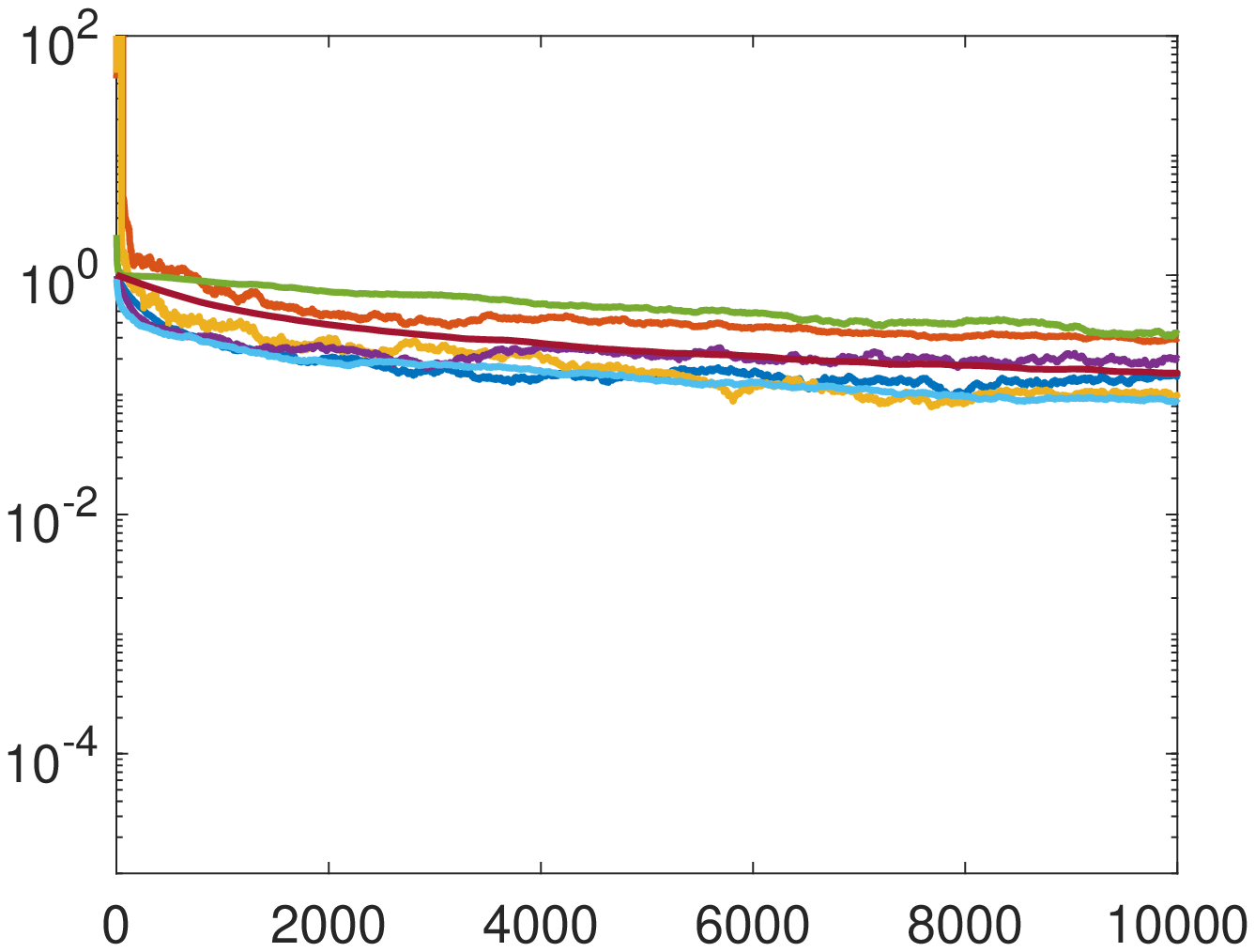}
        \caption{$\epsilon = 100$}
    \end{subfigure}
    \caption{Convergence results for different methods and different
    noise level. For all subfigures, $y$-axis denotes the relative
    error as in \eqref{eq:relerr} and $x$-axis denotes the iteration
    number. The legends for (b) - (f) are the same as that in (a).}
    \label{fig:Iterations-ex1}
\end{figure}

Figure~\ref{fig:Iterations-ex1} illustrates the performance of
different stochastic gradient descent methods with different levels of
noise in the gradient. According to Figure~\ref{fig:Iterations-ex1}
(a) and (b), which correspond to low noise cases, the stochastic
Barzilai-Borwein method outperforms other methods. Although it is not
the best method in the first 2000 iterations, it achieves the best
relative error when the iteration number getting larger. In these low
noise cases, BDMC and BDMC2 do not perform well comparing to other
methods. While, in the medium noise cases (see
Figure~\ref{fig:Iterations-ex1} (c) and (d)), BDMC and BDMC2 are the
best methods among all. They outperform other methods from the very
beginning of the iterations. Based on our numerical tests,
Figure~\ref{fig:Iterations-ex1} (a), (b), (c) and (d) are relatively
robust with respect to different runs of the algorithms. However, as
the noise level being close to the largest eigenvalue of $M$, the
test results are no longer robust. The ranking of the methods shifts
randomly. For example, Figure~\ref{fig:Iterations-ex1} (e) and (f) are
two runs of the methods at the same noise level $\epsilon = 100$. In
(e), BDMC is the worst method, whereas in (f) it is one of the
best. Therefore, conclusion of the ranking of the methods cannot be
made here for the high noise level case.

We also would like to raise one concern about the BDMC and BDMC2
method. For both of them, the relative errors ``blow up'' in the
first few iterations and quickly drop down to a reasonable level. And
the peaks of the ``blow-ups'' could be as large as $10^{10}$ for the
example here. These peaks are also related to the condition number of
the original matrix $M$. The worse of the condition number of $M$,
the higher of the peak. At the same time, the parameter $t$ must be
tuned to close to $-1$ ($\beta$ in BDMC2 be tuned to close to $0$)
to enable the drop-down behavior and achieve convergence.

The last point is about the sensitivity of the parameters. The
parameters of gradient descent method and heavy ball method are the
most sensitive ones, i.e., small change in the parameters would result
huge performance difference. On the other side, the parameters in the
stochastic Barzilai-Borwein method and AdaGrad method are the least
sensitive ones. Although in Table~\ref{tab:opt-paras-ex1}, their
parameters vary a lot, but many other choices of their parameters
actually show similar convergence behavior. Therefore, these two
methods are easier to use in practice.


\section{Conclusion}
\label{sec:Conclusion}

This note provides mathematical understanding of the original BDMC method
in \cite{Prokofev2007}. The two parts in the BDMC method are interpreted
as the preconditioning part and stochastic iterative part. In the
preconditioning part, a quadratic polynomial of the matrix,
$p_{\hlambda}(M)$, turns an indefinite matrix into a positive definite
matrix and the corresponding condition number could potentially be orders
of magnitudes smaller than the traditional preconditioning technique,
$M^2$.  In addition to $p_{\hlambda}(M)$, we propose another quadratic
polynomial $p_{\hdelta}(M)$ which has the same performance on some
matrices and achieves better performance on another big group of matrices.
For the second part, the stochastic iterative part, we rewrite the
original multi-step BDMC method as a gradient descent method with
diminishing step size, \eqref{eq:BDMC-Iter-Init}. Asymptotically, the
complicated stepsize can be replaced by \eqref{eq:BDMC-Iter}. The choices
of both stepsizes behave similar on all numerical examples we have tested.
Due to the DMC procedure involved in the evaluation of the matrix, the
BDMC iterative method is actually a stochastic gradient descent method on
quadratic objective function. Naturally, we introduce a few stochastic
gradient descent methods from machine learning and deep learning, which
are originally designed for non-convex objective functions. All these
stochastic gradient descent methods are tested on a simulated toy example
with different level of noises. In the small noise levels, stochastic
Barzilai-Borwein shows great power over other method.  For medium noise
level close to the smallest eigenvalue of the matrix, BDMC methods (with
two choices of stepsize) outperform other methods. When the noise level is
as large as the largest eigenvalue of the matrix, the conclusion for the
performance of methods is unclear.

At this point, the new preconditioning technique and variant stochastic
gradient descent methods are only tested on simulated matrices with noise.
In the future, we would like to apply all these techniques and methods to
the actual interacting systems using bold diagrammatic \MC{} method.


\medskip 
\noindent {\bf Acknowledgments.}  This work is partially supported by
the National Science Foundation under awards OAC-1450280 and
DMS-1454939. We thank Lexing Ying for interesting discussions regards
the BDMC method.

\bibliographystyle{abbrv} \bibliography{ref}

\end{document}